 \newtheorem{thm}{Theorem}[section]
 \newtheorem{cor}[thm]{Corollary}
 \newtheorem{lem}[thm]{Lemma}
 \theoremstyle{definition}
 \theoremstyle{remark}
 \numberwithin{equation}{section}
\def\Spin{{\rm Spin}}
\def\Pin{{\rm Pin}}
\def\Cen{{\rm Cen}}
\def\Mat{{\rm Mat}}
\def\R{{\mathbb R}}
\def\C{{\mathbb C}}
\def\OO{{\rm O}}
\def\SO{{\rm SO}}
\def\sign{{\rm sign}}
\def\End{{\rm End}}
\def\T{{\rm T}}
\def\cl{{C}\!\ell}
\begin{document}

%
%
%
%
%
%
%
%
%

\title[Calculation of elements of spin groups using method of averaging]
 {Calculation of elements of spin groups\\ using method of averaging in Clifford's\\ geometric algebra}

\author[Dmitry Shirokov]{Dmitry Shirokov}

\address{%
National Research University Higher School of Economics\\
Myasnitskaya str. 20\\
101000 Moscow\\
Russia}
\address{
Institute for Information Transmission Problems of Russian Academy of Sciences\\
Bolshoy Karetny per. 19\\
127051 Moscow\\
Russia}
\email{dm.shirokov@gmail.com}


\subjclass{15A66, 11E88, 15B10, 20B05}

\keywords{spin group, Clifford algebra, geometric algebra, rotor, method of averaging, orthogonal group, two-sheeted cover}

\date{December 31, 2018}

\begin{abstract}
We present a method of computing elements of spin groups in the case of arbitrary dimension. This method generalizes Hestenes method for the case of dimension 4. We use the method of averaging in Clifford's geometric algebra previously proposed by the author. We present explicit formulas for elements of spin group that correspond to the elements of orthogonal groups as two-sheeted covering. These formulas allow us to compute rotors, which connect two different frames related by a rotation in geometric algebra of arbitrary dimension.
\end{abstract}

\maketitle
\section{Introduction}
\label{sec1}

In Clifford's geometric algebra, it is convenient to describe rotations using elements of spin groups. Spin groups of arbitrary dimension are naturally realized in this algebra. Nowadays Clifford's geometric algebra is widely used in physics, computer science, engineering, and other sciences.

Professor D. Hestenes presented the method of computing elements of spin group $\Spin_+(1,3)$ in the case of dimension $n=4$ in \cite{Hestenes} (pp.~52--53). This method is mentioned in other papers and books (see, for example, \cite{Lounesto}, p.~130). We generalize this method to the case of arbitrary dimension $n=p+q$ and all spin groups $\Spin_+(p,q)$, $\Spin(p,q)$, $\Pin_+(p,q)$, $\Pin_-(p,q)$, $\Pin(p,q)$. We use the method of averaging in Clifford's geometric algebra previously proposed by the author \cite{averaging}, \cite{averaging2} to do this.

There are other methods of calculation of elements of spin groups using exponentials and exterior exponentials of bivectors (see \cite{Hestenes}, \cite{Hestenes2}, \cite{Lasenby}, \cite{Lounesto}, \cite{Marchuk}). But all these methods work only in the cases of fixed dimensions, especially $n=3$ or $n=4$. In this paper, we present explicit formulas for the elements of spin groups, which work in the case of arbitrary $n$.

Note that some years ago we presented another method of computing elements of spin groups using generalized Pauli's theorem. These results were presented at the conference AGACSE 2012 (La Rochelle, France, July 2012) and published in the Conference Proceedings in AACA \cite{spin}. In \cite{spin}, we presented an algorithm (we had no explicit formulas) for computing elements of spin groups. In the new method presented in the current paper, using the method of averaging, we obtain explicit formulas for the elements of spin groups.

The paper is organized as follows. In Section \ref{sec2}, we discuss a formalism of Clifford's geometric algebra and use it for the consideration of pseudo-orthogonal groups. We prove some auxiliary lemmas. In Section \ref{sec3}, we present a complete picture of five orthogonal groups and five corresponding spin groups in the case of arbitrary dimension. In Section \ref{sec4}, we discuss Hestenes method of computing elements of the group $\Spin_+(1,3)$. In Section \ref{sec5}, we generalize this method for the case of arbitrary dimension using the method of averaging in Clifford's geometric algebra. In Section \ref{sec6}, we present an explicit formula for computing rotors, which connect two different frames related by a rotation in geometric algebra of arbitrary dimension.

\section{Clifford algebras and pseudo-orthogonal group}
\label{sec2}

Let us consider the real Clifford algebra $\cl_{p,q}$, $p+q=n$, with the identity element $e$ and the generators $e_a$, $a=1, \ldots, n$, satisfying
\begin{eqnarray}
e_a e_b+e_b e_a=2\eta_{ab}e,\label{gen}
\end{eqnarray}
where $\eta=||\eta_{ab}||$ is the diagonal matrix with its first $p$ entries equal to $1$ and the last $q$ entries equal to $-1$ on the diagonal.

We use notation with ordered multi-indices $A$ for the basis elements of the Clifford algebra $\cl_{p,q}$:
$$e_A=e_{a_1 \ldots a_k},\qquad 1\leq a_1 < \cdots < a_k \leq n.$$
We denote the length of multi-index $A$ by $|A|$. In the case of the identity element $e$, we have empty multi-index $\o$ of length $0$. We call the subspace of $\cl_{p,q}$ of Clifford algebra elements, which are linear combinations of basis elements with multi-indices of length $|A|=k$, the subspace of grade $k$ and denote it by $\cl^k_{p,q}$. We denote the projection operator onto subspace of the grade $k$ by $\pi_k$. We denote inverses of generators by $e^a:=\eta^{ab}e_b=(e_a)^{-1}$, $a=1, \ldots, n$, and inverses of basis elements by $e^A=(e_A)^{-1}$. The frame $e^a$, $a=1, \ldots, n$ is often called reciprocal frame for the frame $e_a$, $a=1, \ldots, n$.

Even and odd subspaces we denote by $\cl^{(0)}_{p,q}$ and $\cl^{(1)}_{p,q}$. We have
$$\cl_{p,q}=\bigoplus_{k=0}^n \cl^k_{p,q},\qquad \cl^{(0)}_{p,q}=\!\!\!\bigoplus_{k=0\!\!\mod 2}\cl^k_{p,q},\qquad \cl^{(1)}_{p,q}=\!\!\!\bigoplus_{k=1\!\!\mod 2}\cl^k_{p,q}.$$

Let us consider the pseudo-orthogonal group $\OO(p,q)$, $p+q=n$:
\begin{eqnarray}
\OO(p,q):=\{P\in\Mat(n,\R): P^\T \eta P=\eta\}.\label{oo}
\end{eqnarray}
Denote by
$$p^A_B=p^{a_1 \ldots a_k}_{b_1 \ldots b_k},\qquad a_1 < \cdots < a_k,\qquad b_1 < \cdots < b_k,$$
the minors of the matrix $P=||p^a_b||$. They are determinants of the submatrices formed by rows $a_1$, \ldots, $a_k$ and columns $b_1$, \ldots, $b_k$. In the particular case of multi-indices of length $1$ ($A=a$, $B=b$), the corresponding minor is just an element $p^a_b$ of the matrix $P$. In the case of empty multi-indices $A$ and $B$, the corresponding minor equals $1$ by definition. We use Einstein summation convention for ordered multi-indices too.

We have the following simple fact.
\begin{lem}\label{lemma1}
The set $\beta_a:=p_a^b e_b\in\cl_{p,q}$ satisfies the following conditions
$$\beta_a \beta_b +\beta_b \beta_a=2\eta_{ab}e$$
if and only if
$$P=||p_a^b||\in\OO(p,q)=\{P\in\Mat(n, \R): P^\T \eta P=\eta\}.$$
\end{lem}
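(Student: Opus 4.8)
\emph{Proposal.} The plan is to reduce the anticommutation relation for the $\beta_a$ directly to the matrix identity defining $\OO(p,q)$ by a short computation, the chain of equivalences giving both directions of the ``if and only if'' at once; the only care needed is the bookkeeping of indices and the identity $\eta^{-1}=\eta$.

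First I would expand $\beta_a\beta_b+\beta_b\beta_a$ using bilinearity of the Clifford product together with the defining relation (\ref{gen}). Writing $\beta_a\beta_b=p_a^c p_b^d\, e_c e_d$ and, after relabeling summation indices, $\beta_b\beta_a=p_a^d p_b^c\, e_d e_c$, the two products combine into the symmetric form
\begin{eqnarray*}
\beta_a\beta_b+\beta_b\beta_a = p_a^c p_b^d\,(e_c e_d+e_d e_c) = 2\, p_a^c\,\eta_{cd}\,p_b^d\, e .
\end{eqnarray*}
Since the standard basis $\{e_A\}$ of $\cl_{p,q}$ is linearly independent (in particular $e\neq 0$ and lies in grade $0$), the relation $\beta_a\beta_b+\beta_b\beta_a=2\eta_{ab}e$ holds for all $a,b=1,\dots,n$ if and only if the scalar identity $p_a^c\,\eta_{cd}\,p_b^d=\eta_{ab}$ holds for all $a,b$.

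It remains to identify this with membership in $\OO(p,q)$. Reading $p_a^b$ as the entry in row $a$ and column $b$ of $P=||p_a^b||$, the left-hand side is precisely the $(a,b)$ entry of $P\eta P^\T$, so the condition is $P\eta P^\T=\eta$. Finally I would record that this is equivalent to the form $P^\T\eta P=\eta$ used in (\ref{oo}): from $P\eta P^\T=\eta$ and invertibility of $\eta$ one gets $\det(P)^2=1$, hence $P$ is invertible; multiplying by $P^{-1}$ on the left and using $\eta^{-1}=\eta$ (as $\eta$ is diagonal with entries $\pm1$) yields $P^\T=\eta P^{-1}\eta$, and therefore $P^\T\eta P=\eta P^{-1}\eta\cdot\eta\cdot P=\eta P^{-1}P=\eta$; the converse is symmetric. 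Thus the stated relation for the $\beta_a$ is equivalent to $P\in\OO(p,q)$.

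The argument is a routine computation; the only genuine points to get right are the relabeling of indices that merges the two products into a single symmetric combination, the appeal to linear independence of the basis to pass from an identity in $\cl_{p,q}$ to an identity among scalars, and the passage between the two equivalent forms $P\eta P^\T=\eta$ and $P^\T\eta P=\eta$ of the orthogonality condition. I do not expect any substantial obstacle.
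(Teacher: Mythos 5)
Your proposal is correct and follows essentially the same computation as the paper: expand the anticommutator via bilinearity and the relation (\ref{gen}) to get $p_a^c p_b^d\eta_{cd}=\eta_{ab}$, then identify this with the defining condition of $\OO(p,q)$. The only difference is a matter of index convention (reading the lower index as the row), which lands you on $P\eta P^\T=\eta$ and forces the short extra argument that this is equivalent to $P^\T\eta P=\eta$; with the paper's convention (upper index is the row) the identification is immediate, but your extra step is correct and harmless.
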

\begin{proof}
We have
$$\beta_a \beta_b+\beta_b \beta_a=p_a^c p_b^d (e_c e_d+e_d e_c)=2 p_a^c p_b^d \eta_{cd}.$$
This means that $\beta_a \beta_b +\beta_b \beta_a=2\eta_{ab}e$ if and only if $p_a^c p_b^d \eta_{cd}=\eta_{ab}$. From the last condition, we obtain the definition of $\OO(p,q)$ (\ref{oo}).
\end{proof}

For
\begin{eqnarray}
\beta_a:=p_a^b e_b,\qquad P=||p_a^b||\in\OO(p,q),\label{beta}
\end{eqnarray}
let us consider the elements
\begin{eqnarray}
\beta_A=\beta_{a_1 \ldots a_k}:=\beta_{a_1}\cdots \beta_{a_k},\qquad 1\leq a_1 < \cdots < a_k \leq n.\label{newbas}
\end{eqnarray}

\begin{lem}\label{lemma2}
For (\ref{beta}), we have
\begin{eqnarray}
\beta_{a_1 \ldots a_k}=p_{a_1 \ldots a_k}^{b_1 \ldots b_k}e_{b_1 \ldots b_k},\label{min}
\end{eqnarray}
where $p_{a_1 \ldots a_k}^{b_1 \ldots b_k}$ are minors of the matrix $P=||p_a^b||\in\OO(p,q)$ and we have a sum over all ordered multi-indices $b_1 \ldots b_k$ of length $k$ in (\ref{min}).
\end{lem}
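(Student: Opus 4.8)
The plan is to expand the Clifford product defining $\beta_{a_1\ldots a_k}$ directly in the standard basis and to recognize the emerging coefficients as minors of $P$. The one structural fact I will use is that this product lies in a single grade, and it comes straight from Lemma~\ref{lemma1}: since $\eta_{ab}=0$ for $a\neq b$, the relation $\beta_a\beta_b+\beta_b\beta_a=2\eta_{ab}e$ yields $\beta_a\beta_b=-\beta_b\beta_a$ whenever $a\neq b$ (and $\beta_a^2=\eta_{aa}e$). Hence, for $a_1<\cdots<a_k$ the factors of $\beta_{a_1}\cdots\beta_{a_k}$ pairwise anticommute, so this Clifford product equals the wedge product $\beta_{a_1}\wedge\cdots\wedge\beta_{a_k}$ and therefore belongs to $\cl^k_{p,q}$.

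Next I would substitute $\beta_{a_i}=p_{a_i}^{b_i}e_{b_i}$ from (\ref{beta}) and use multilinearity:
$$\beta_{a_1\ldots a_k}=\beta_{a_1}\cdots\beta_{a_k}=\sum_{b_1,\ldots,b_k=1}^{n}p_{a_1}^{b_1}\cdots p_{a_k}^{b_k}\,e_{b_1}\cdots e_{b_k}.$$
Any summand with a repeated index collapses to a term of grade at most $k-2$ (two equal generators multiply to $\pm e$), and by the previous paragraph all such contributions must cancel; so only the summands with pairwise distinct $b_1,\ldots,b_k$ survive. In such a summand, sorting the generators into increasing order $c_1<\cdots<c_k$ produces the sign $\sign(\sigma)$ of the permutation $\sigma$ defined by $b_i=c_{\sigma(i)}$, i.e. $e_{b_1}\cdots e_{b_k}=\sign(\sigma)\,e_{c_1\ldots c_k}$.

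Finally, collecting for each ordered multi-index $c_1<\cdots<c_k$ the coefficient of $e_{c_1\ldots c_k}$ and summing over the permutations $\sigma\in S_k$ that relabel $(c_1,\ldots,c_k)$ into $(b_1,\ldots,b_k)$, one obtains
$$\sum_{\sigma\in S_k}\sign(\sigma)\,p_{a_1}^{c_{\sigma(1)}}\cdots p_{a_k}^{c_{\sigma(k)}}=\det\bigl(p_{a_i}^{c_j}\bigr)_{i,j=1}^{k}=p_{a_1\ldots a_k}^{c_1\ldots c_k},$$
which is exactly the minor of $P$ on rows $a_1,\ldots,a_k$ and columns $c_1,\ldots,c_k$; summing over all ordered $c$'s gives (\ref{min}). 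The computation is short, and the only point that really needs care is the grade confinement invoked in the second paragraph — that $\beta_{a_1}\cdots\beta_{a_k}$ has no components of grade below $k$, equivalently that the repeated-index terms cancel. The anticommutativity argument settles this cleanly; alternatively one can prove it by induction on $k$, expanding $\beta_{a_1\ldots a_k}=\beta_{a_1\ldots a_{k-1}}\beta_{a_k}$ and using the total antisymmetry of the minors $p_{a_1\ldots a_{k-1}}^{b_1\ldots b_{k-1}}$ in their upper indices to kill the grade-$(k-2)$ part. Everything else is the Leibniz expansion of a determinant.
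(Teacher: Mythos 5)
Your proof is correct, but it is organized differently from the paper's. The paper proceeds by induction on $k$: assuming $\beta_{a_1\ldots a_{m-1}}=p^{c_1\ldots c_{m-1}}_{a_1\ldots a_{m-1}}e_{c_1\ldots c_{m-1}}$, it multiplies by $\beta_{a_m}$, identifies the grade-$m$ part with the minors via the Laplace expansion along a column, and argues that the grade-$(m-2)$ part vanishes using the orthogonality relation (\ref{po3}) together with a further induction whose details are explicitly omitted as cumbersome. You instead expand the whole product $\beta_{a_1}\cdots\beta_{a_k}$ at once, obtain the grade-$k$ coefficients directly from the Leibniz formula for the determinant, and dispose of all lower-grade contributions structurally: by Lemma \ref{lemma1} the $\beta_a$ satisfy the same Clifford relations, hence are pairwise anticommuting (orthogonal) invertible vectors, so their Clifford product is their exterior product and lies in $\cl^k_{p,q}$. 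This is a genuine gain: the step the paper leaves unproved (the cancellation of the repeated-index terms) is exactly the step your argument handles cleanly, at the price of invoking the standard fact that the Clifford product of mutually orthogonal vectors equals their wedge product; if you want the note fully self-contained, state that fact's one-line inductive proof (the contraction of $\beta_{a_1}\wedge\cdots\wedge\beta_{a_{k-1}}$ with $\beta_{a_k}$ is a sum of terms each carrying a factor $2\eta_{a_i a_k}=0$), since that is the only place where orthogonality of $P$ actually enters. Your closing alternative (induction on $k$ using antisymmetry of the minors in the upper indices) is essentially the paper's route, so the two proofs meet there; the paper's version stays within elementary matrix manipulations, while yours trades the Laplace-expansion bookkeeping for a cleaner geometric-algebra identity and yields the minors in closed determinantal form in one pass.
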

Using our notations, we can rewrite (\ref{min}) in the following way
\begin{eqnarray}
\beta_A=p_A^B e_B,\label{min2}
\end{eqnarray}
where we have a sum over all ordered multi-indices $B$ of the same length as the length of the multi-index $A$.
\begin{proof}
For $k=1$, we have $\beta_a=p_a^b e_b$ by the definition. For $k=2$, we have
\begin{eqnarray}
\beta_{a_1 a_2}&=&\beta_{a_1} \beta_{a_2}=(p_{a_1}^1 e_1+\cdots+p_{a_1}^n e_n)(p_{a_2}^1 e_1+\cdots+p_{a_2}^n e_n)\nonumber\\
&=&(p_{a_1}^{1} p_{a_2}^1 \eta_{11}+\cdots+p_{a_1}^n p_{a_2}^n \eta_{nn})e\nonumber\\
&+&(p_{a_1}^1 p_{a_2}^2-p_{a_2}^1 p_{a_1}^2)e_{12}+\cdots+(p_{a_1}^{n-1} p_{a_2}^n-p_{a_2}^{n-1} p_{a_1}^n)e_{n-1 \, n}\nonumber\\
&=&p_{a_1 a_2}^{12}e_{12}+\cdots+ p_{a_1 a_2}^{n-1 \, n}e_{n-1\, n}=p_{a_1 a_2}^{b_1 b_2}e_{b_1 b_2},\nonumber
\end{eqnarray}
where we use
\begin{eqnarray}
p_{a_1}^{b}p_{a_2}^{b} \eta_{b b}=0,\qquad a_1<a_2,\qquad P=||p_a^b||\in\OO(p,q).\label{po3}
\end{eqnarray}

In the general case, the proof is by induction on $k$. Suppose that we have (\ref{min}) for $k=m-1$. Let us prove it for $k=m$. We have
\begin{eqnarray}
\beta_{a_1 \ldots a_m}=\beta_{a_1 \ldots a_{m-1}}\beta_{a_m}=(p_{a_1 \ldots a_{m-1}}^{c_1 \ldots c_{m-1}}e_{c_1 \ldots c_{m-1}})(p_{a_m}^{c}e_{c}).\label{po}
\end{eqnarray}
Multiplying two sums, we obtain the element of grade $m-2$ (in the case $c \in \{c_1, \ldots, c_{m-1}\}$) and the element of grade $m$ (in the case $c \notin \{c_1, \ldots, c_{m-1}\}$).

The corresponding element of grade $m-2$ equals zero. Using (\ref{po3}) and Laplace expansion, we can prove it again by induction. We omit detailed proof because of its cumbersomeness.

The corresponding element of grade $m$ equals $p_{a_1 \ldots a_m}^{b_1 \ldots b_m}e_{b_1 \ldots b_m}$ because of the Laplace expansion along one column of the corresponding minor:
$$
p_{a_1 \ldots a_m}^{b_1 \ldots b_m}=\sum_{j=1}^{m}(-1)^{m+j} p_{a_1 \ldots a_{m-1}}^{b_1 \ldots \check{b_j} \ldots b_{m}}p_{a_m}^{b_j},
$$
where $b_1 \ldots \check{b_j} \ldots b_{m}$ is the ordered multi-index of length $m-1$, which is obtained from $b_1 \ldots b_m$ by discarding $b_j$.
\end{proof}

Note that as particular case of (\ref{min}) we get
$$\beta_{1\ldots n}=\det (P) e_{1\ldots n},\qquad \det P=\pm 1.$$
The conditions $\beta_{1\ldots n}=\pm e_{1\ldots n}$ mean that $2^n$ elements $\beta_A$ (\ref{newbas}) are linearity independent and constitute a new basis of $\cl_{p,q}$ (see pp. 289--290 in \cite{Snygg} or pp. 127--128 in \cite{Port}).

\section{Complete picture of orthogonal and spin groups}
\label{sec3}

For the convenience of the reader, we present a complete picture of five orthogonal groups and the corresponding five spin groups in the case of arbitrary dimension. For more details, see \cite{lect} and \cite{BT}. Often, only the group $\Spin_+(p,q)$ is considered, but sometimes other spin groups are also required for different applications.

\begin{lem}
For an arbitrary matrix $P\in\OO(p,q)$, we have
\begin{eqnarray}
\det P=\pm 1,\quad |p^{1\ldots p}_{1\ldots p}|\geq 1,\quad |p^{p+1 \ldots ,n}_{p+1 \ldots n}|\geq 1,\quad p^{1\ldots p}_{1\ldots p}=\frac{p^{p+1 \ldots n}_{p+1\ldots n}}{\det P},\label{4t}
\end{eqnarray}
where $p^{1\ldots p}_{1\ldots p}$ and $p^{p+1 \ldots ,n}_{p+1 \ldots n}$ are the corresponding minors of the matrix $P$.
\end{lem}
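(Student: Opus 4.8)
The plan is to exploit the block structure of the defining relation $P^\T\eta P=\eta$ with respect to the splitting $\R^n=\R^p\oplus\R^q$ carried by $\eta$. Write $P$ in block form $P=\begin{pmatrix}A & B\\ C & D\end{pmatrix}$, where $A$ is $p\times p$, $D$ is $q\times q$, and $\eta=\begin{pmatrix}I_p & 0\\ 0 & -I_q\end{pmatrix}$. The relation $P^\T\eta P=\eta$ gives $A^\T A-C^\T C=I_p$, $D^\T D-B^\T B=I_q$, and $A^\T B-C^\T D=0$; the relation $P\eta P^\T=\eta$ (equivalent, since $\eta^2=I$ forces $P^{-1}=\eta P^\T\eta$) gives $AA^\T-BB^\T=I_p$ and $DD^\T-CC^\T=I_q$.

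First I would prove the determinant identity $\det P=\pm1$: this is immediate by taking determinants in $P^\T\eta P=\eta$, giving $(\det P)^2=1$. Next, from $A^\T A=I_p+C^\T C$ the matrix $A^\T A$ is symmetric and $\geq I_p$ in the sense of quadratic forms (since $C^\T C\geq0$), hence all its eigenvalues are $\geq1$, so $\det(A^\T A)=(\det A)^2\geq1$, i.e. $|\det A|=|p^{1\ldots p}_{1\ldots p}|\geq1$; the same argument applied to $D^\T D=I_q+B^\T B$ yields $|\det D|=|p^{p+1\ldots n}_{p+1\ldots n}|\geq1$. Here I am using that $\det A$ is exactly the minor $p^{1\ldots p}_{1\ldots p}$ (the top-left $p\times p$ submatrix) and $\det D$ is exactly $p^{p+1\ldots n}_{p+1\ldots n}$.

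The last identity $p^{1\ldots p}_{1\ldots p}=p^{p+1\ldots n}_{p+1\ldots n}/\det P$, i.e. $\det A\cdot\det D=\det P$, is the part I expect to require the most care. The clean route is Schur complements: since $A$ is invertible (we just showed $|\det A|\geq1$), the block factorization
\begin{equation}
P=\begin{pmatrix}A & B\\ C & D\end{pmatrix}=\begin{pmatrix}I_p & 0\\ CA^{-1} & I_q\end{pmatrix}\begin{pmatrix}A & 0\\ 0 & D-CA^{-1}B\end{pmatrix}\begin{pmatrix}I_p & A^{-1}B\\ 0 & I_q\end{pmatrix}\nonumber
\end{equation}
gives $\det P=\det A\cdot\det(D-CA^{-1}B)$, so it suffices to show $\det(D-CA^{-1}B)=\det D$, equivalently $\det(I_q-D^{-1}CA^{-1}B)=1$. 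To get this I would feed in the orthogonality relations: from $A^\T B=C^\T D$ we get $A^{-1}$-twisted expressions relating $B$ and $C$, and combining with $A^\T A-C^\T C=I_p$ one can show $D^{-1}CA^{-1}B=D^{-1}C(A^\T)^{-1}(A^\T A-I_p)A^{-1}\cdot$(something) collapses appropriately; more robustly, one notes $P^{-1}=\eta P^\T\eta=\begin{pmatrix}A^\T & -C^\T\\ -B^\T & D^\T\end{pmatrix}$, and the $(2,2)$ block of $PP^{-1}=I$ reads $-CB^\T+DD^\T=I_q$ while the Schur complement of the $(1,1)$ block $A^\T$ of $P^{-1}$ is $D^\T-B^\T(A^\T)^{-1}(-C^\T)=D^\T+B^\T(A^\T)^{-1}C^\T$, whose determinant times $\det A^\T$ equals $\det P^{-1}=\pm1$; transposing and matching with the first Schur identity forces $\det(D-CA^{-1}B)=1/\det(D^\T+B^\T(A^\T)^{-1}C^\T)$ and a short computation identifies $D-CA^{-1}B=(D^\T+B^\T(A^\T)^{-1}C^\T)^{-\T}\cdot$const, pinning the constant to $\pm1$ consistent with $\det P=\pm1$. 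The main obstacle is purely bookkeeping: keeping the signs from $\eta$ straight through the Schur complement manipulation; once that is done the identity $\det A\cdot\det D=\det P$ drops out and rearranges to the stated form $p^{1\ldots p}_{1\ldots p}=p^{p+1\ldots n}_{p+1\ldots n}/\det P$.
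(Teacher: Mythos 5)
Your treatment of the first three assertions is correct and essentially matches the paper's (take determinants in $P^\T\eta P=\eta$; bound $\det(A^\T A)=(\det A)^2$ and $\det(D^\T D)=(\det D)^2$ below by $1$ using positive semidefiniteness of $C^\T C$ and $B^\T B$). The gap is in the fourth identity, and it starts with a mistranslation: you restate $p^{1\ldots p}_{1\ldots p}=p^{p+1\ldots n}_{p+1\ldots n}/\det P$ as $\det A\cdot\det D=\det P$, but the correct restatement is $\det A\cdot\det P=\det D$; the two agree only when $|\det D|=1$. Because of this, your reduction ``it suffices to show $\det(D-CA^{-1}B)=\det D$'' targets a false statement: for the hyperbolic rotation in $\OO(1,1)$ with $A=D=\cosh t$, $B=C=\sinh t$, one has $D-CA^{-1}B=1/\cosh t$ while $\det D=\cosh t$. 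The true relation is $\det(D-CA^{-1}B)=1/\det D$, which combined with your (correct) Schur factorization $\det P=\det A\cdot\det(D-CA^{-1}B)$ yields exactly the desired $\det A=\det D/\det P$. Your fallback argument does not repair this: the $(2,1)$ block of $P^{-1}=\eta P^\T\eta$ is $-B^\T$, so the Schur complement of $A^\T$ in $P^{-1}$ is $D^\T-B^\T(A^\T)^{-1}C^\T$, not $D^\T+B^\T(A^\T)^{-1}C^\T$; the claimed equality $\det(D-CA^{-1}B)=1/\det(D^\T+B^\T(A^\T)^{-1}C^\T)$ would again need $(\det A)^2=1$; and ``a short computation identifies \dots const, pinning the constant to $\pm1$'' is not an argument.

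The fix along your own lines is short: by the block-inversion formula the $(2,2)$ block of $P^{-1}$ equals $(D-CA^{-1}B)^{-1}$, while from $P^{-1}=\eta P^\T\eta$ that same block equals $D^\T$; hence $\det(D-CA^{-1}B)=1/\det D$ and $\det P=\det A/\det D$, which is the lemma. The paper reaches the same identity by a different (one-line) route: since $P^{-1}=\eta P^\T\eta$, the leading $p\times p$ minor of $P^{-1}$ equals $\det A^\T=p^{1\ldots p}_{1\ldots p}$, and Jacobi's theorem on minors of the inverse matrix (cited from Gantmacher) gives $(p^{-1})^{1\ldots p}_{1\ldots p}=p^{p+1\ldots n}_{p+1\ldots n}/\det P$. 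Your Schur-complement idea is a legitimate alternative that avoids quoting Jacobi's theorem, but as written the key step is both misstated and unproved.
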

\begin{proof} The first statement is trivial. For the matrix
$$P=\left(
      \begin{array}{cc}
        A_{p\times p} & B_{p \times q} \\
        C_{q\times p} & D_{q \times q} \\
      \end{array}
    \right)\in\OO(p,q)$$
with the blocks $A, B, C, D$ of corresponding sizes, we have
$$P^\T \eta P= \eta,\quad B^\T B-D^\T D=-{\bf 1},\quad |\det(D)|=|p^{p+1 \ldots ,n}_{p+1 \ldots n}|\geq 1;$$
$$P \eta P^\T =\eta,\quad A A^\T- B B^\T={\bf 1},\quad |\det(A)|=|p^{1\ldots p}_{1\ldots p}|\geq 1.$$
From $\eta P^\T \eta= P^{-1}$ and the well-known formula for the minor of the inverse of a matrix (see, for example, \cite{Gant}, pp. 25 - 27), we obtain
$$p^{1\ldots p}_{1\ldots p}=(p^{-1})^{1\ldots p}_{1\ldots p}=\frac{p^{p+1 \ldots n}_{p+1\ldots n}}{\det P},$$
where $(p^{-1})^{1\ldots p}_{1\ldots p}$ is the corresponding minor of the matrix $P^{-1}$.
\end{proof}

The group $\OO(p,q)$ has the following subgroups:
\begin{eqnarray}
\OO_+(p,q)&:=&\{P\in\OO(p,q): p^{1\ldots p}_{1\ldots p}\geq 1\},\nonumber\\
\OO_-(p,q)&:=&\{P\in\OO(p,q): p^{p+1\ldots n}_{p+1\ldots n}\geq 1\},\nonumber\\
\SO(p,q)&:=&\{P\in\OO(p,q): \det P=1\},\nonumber\\
\SO_+(p,q)&:=&\{P\in\SO(p,q): p^{1\ldots p}_{1\ldots p}\geq 1\}=\{P\in\SO(p,q): p^{p+1\ldots n}_{p+1\ldots n}\geq 1\}.\nonumber
\end{eqnarray}
For example, in the particular case of Minkowski space, we have Lorentz group $\OO(1,3)$, special (or proper) Lorentz group $\SO(1,3)$, orthochronous (or time preserving) Lorentz group $\OO_+(1,3)$, orthochorous (or parity preserving) Lorentz group $\OO_-(1,3)$, special orthochronous Lorentz group $\SO_+(1,3)$.

In Euclidean cases ($p=0$ or $q=0$), we have only two orthogonal groups instead of five groups:
\begin{eqnarray}
\OO(n)&:=&\OO(n,0)=\OO_-(n,0)\cong \OO(0,n)=\OO_+(0,n),\nonumber\\
\SO(n)&:=&\SO(n,0)=\SO_+(n,0)=\OO_+(n,0)\nonumber\\
&\cong&\SO(0,n)=\SO_+(0,n)=\OO_-(0,n).\nonumber
\end{eqnarray}

We denote grade involution (main involution) in $\cl_{p,q}$ by
$$\widehat{U}:=U|_{e_a\to -e_a},\qquad U\in\cl_{p,q}$$
and reversion (anti-involution) by
$$\widetilde{U}:=U|_{e_{a_1\ldots a_k}\to e_{a_k}\ldots e_{a_1}},\qquad U\in\cl_{p,q}.$$

Denote by $M^\times$ the subset of invertible elements of any set $M$. Let us consider the Lipschitz group
\begin{eqnarray}
\Gamma^\pm_{p,q}&:=&\{S\in\cl^{(0)\times}_{p,q}\cup\cl^{(1)\times}_{p,q}: S \cl^1_{p,q}S^{-1}\subset \cl^1_{p,q}\}\nonumber\\
&=&\{v_1 \cdots v_k: v_1, \ldots, v_k \in \cl^{1\times}_{p,q}\}\nonumber
\end{eqnarray}
and its subgroup
\begin{eqnarray}
\Gamma^+_{p,q}&:=&\{S\in\cl^{(0)\times}_{p,q}: S \cl^1_{p,q} S^{-1}\subset\cl^1_{p,q}\}\nonumber\\
&=&\{v_1 \cdots v_{2k}: v_1, \ldots, v_{2k} \in \cl^{1\times}_{p,q}\}\subset \Gamma^\pm_{p,q}.\nonumber
\end{eqnarray}
The following groups are called spin groups:
\begin{eqnarray}
\Pin(p,q)&:=&\{ S\in\Gamma^\pm_{p,q}: \widetilde{S} S=\pm e\}=\{ S\in\Gamma^\pm_{p,q}: \widehat{\widetilde{S}} S=\pm e\},\nonumber\\
\Pin_+(p,q)&:=&\{S\in\Gamma^\pm_{p,q}: \widehat{\widetilde{S}} S=+e\},\nonumber\\
\Pin_-(p,q)&:=&\{S\in\Gamma^\pm_{p,q}: \widetilde{S} S=+e\},\label{spingr}\\
\Spin(p,q)&:=&\{S\in\Gamma^+_{p,q}: \widetilde{S} S= \pm e\}= \{S\in\Gamma^+_{p,q}: \widehat{\widetilde{S}} S= \pm e\},\nonumber\\
\Spin_+(p,q)&:=&\{S\in\Gamma^+_{p,q}: \widetilde{S}S=+e\}=\{S\in\Gamma^+_{p,q}: \widehat{\widetilde{S}} S=+e\}.\nonumber
\end{eqnarray}
Let us consider the twisted adjoint representation
$$\phi: \cl^\times_{p,q}\to \End\cl_{p,q},\qquad S \to \phi_S,\qquad \phi_S U=\widehat{S}US^{-1},\qquad U\in\cl_{p,q}.$$
The following homomorphisms are surjective with the kernel $\{\pm 1\}$:
\begin{eqnarray}
&&\phi: \Pin(p,q) \to \OO(p,q),\nonumber\\
&&\phi: \Spin(p,q) \to \SO(p,q),\nonumber\\
&&\phi: \Spin_+(p,q) \to \SO_+(p,q),\nonumber\\
&&\phi: \Pin_+(p,q) \to \OO_+(p,q),\nonumber\\
&&\phi: \Pin_-(p,q) \to \OO_-(p,q).\nonumber
\end{eqnarray}
It means that, for all $P=||p^a_b||\in\OO(p,q)$, there exist $\pm S\in\Pin(p,q)$ such that
\begin{eqnarray}
\widehat{S} e_a S^{-1}=p_a^b e_b\label{sv}
\end{eqnarray}
and for the other groups similarly. The spin groups (\ref{spingr}) are two-sheeted coverings of the corresponding orthogonal groups.

Our goal is to find out elements $\pm S\in\Pin(p,q)$ for each $P\in\OO(p,q)$ in the case of arbitrary $p$ and $q$ using the relation (\ref{sv}).

\section{Hestenes method}
\label{sec4}

Let us consider the method proposed by D. Hestenes \cite{Hestenes} for the case of dimension $n=4$, $\cl_{1,3}$.

For each element $P=||p^a_b||\in\SO_+(1,3)$, there exist two elements $\pm S\in\Spin_+(1,3)$ such that
\begin{eqnarray}
Se_a S^{-1}=p_a^b e_b,\qquad S^{-1}=\tilde{S}.\label{sp}
\end{eqnarray}
The elements $\pm S$ can be found in the following way
\begin{eqnarray}
S=\pm\frac{L}{\sqrt{\widetilde{L}L}},\qquad L:=p^b_a e_b e^a.\label{Hest}
\end{eqnarray}

Let us discuss the plan of the proof of the formula (\ref{Hest}). Multiplying both sides of the first equation (\ref{sp}) on the right by $e^a=(e_a)^{-1}$, we obtain
\begin{eqnarray}
S e_a S^{-1} e^a=p_a^b e_b e^a=:L.\label{h1}
\end{eqnarray}
We have the following well-known formula (see \cite{Lounesto}, \cite{unitary})
\begin{eqnarray}
e_a \pi_k(S) e^a=(-1)^k (n-2k) \pi_k(S),\qquad S\in\cl_{p,q},\qquad k=0, \ldots, n.\label{h2}
\end{eqnarray}
We have $S\in\Spin_+(1,3)$, so $S=\pi_0(S)+\pi_2(S)+\pi_4(S)$. Using (\ref{h2}), we get from (\ref{h1})
\begin{eqnarray}
4S(\pi_0(S^{-1})-\pi_4(S^{-1}))=L.\label{h3}
\end{eqnarray}
Let us take reversion of both sides of (\ref{h3}). We get
\begin{eqnarray}
4(\pi_0(S^{-1})-\pi_4(S^{-1}))\widetilde{S}=\widetilde{L}.\label{h4}
\end{eqnarray}
Multiplying both sides of (\ref{h4}) by both sides of (\ref{h3}) and using $\widetilde{S}S=e$, we obtain
\begin{eqnarray}
(4(\pi_0(S^{-1})-\pi_4(S^{-1})))^2=\widetilde{L}L.\label{h5}
\end{eqnarray}
Both sides of this equation belong to $\cl^0_{1,3}\oplus\cl^4_{1,3}\cong\C$. Taking square root of both sides of (\ref{h5}), we get
\begin{eqnarray}
4(\pi_0(S^{-1})-\pi_4(S^{-1}))=\pm\sqrt{\widetilde{L}L}.\label{h6}
\end{eqnarray}
Substituting (\ref{h6}) into (\ref{h3}), we obtain (\ref{Hest}).

Note that this method works only in the case of dimension $n=4$ for the matrices $P=||p^a_b||\in\SO_+(1,3)$ with additional condition
\begin{eqnarray}
L \neq 0.\label{condH}
\end{eqnarray}
The condition (\ref{condH}) is equivalent to the condition
$$\pi_0(S)\neq 0 \qquad \mbox{or} \qquad \pi_4(S)\neq 0$$
for the corresponding element $S\in\Spin_+(1,3)$.

In the next section, we will generalize this method to the case of arbitrary $n=p+q$. We will use some other operators instead of (\ref{h2}) to do this.

\section{Generalization of Hestenes method}
\label{sec5}

We have the following new theorems.

\begin{thm}\label{th1} Let us consider the real Clifford algebra $\cl_{p,q}$ with even $n=p+q$. Let $P\in\SO(p,q)$ be an orthogonal matrix such that
\begin{eqnarray}
M:=\sum_{A, B}p^B_A e_B e^A\neq 0.\label{cond}
\end{eqnarray}
Then we can find the elements $\pm S\in\Spin(p,q)$ that correspond to $P=||p^b_a||\in\SO(p,q)$ as two-sheeted covering $S e_a S^{-1}=p_a^be_b$ in the following way:
\begin{eqnarray}
S=\pm\frac{M}{\sqrt{\alpha\,\widetilde{M}M}},\label{1}
\end{eqnarray}
where
$$\widetilde{M}M \in \Cen(\cl_{p,q})=\cl_{p,q}^0\cong \R$$
and the sign
$$\alpha:=\sign (p^{1\ldots p}_{1\ldots p})e=\sign(p^{p+1 \ldots n}_{p+1 \ldots n})e=\widetilde{S}{S}=\pm e$$ depends on the component of the orthogonal group $\SO(p,q)$ (or the corresponding component of the spin group $\Spin(p,q)$).
\end{thm}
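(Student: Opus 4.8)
The plan is to mimic the structure of Hestenes' argument for $n=4$, but replacing the single operator $e_a\pi_k(S)e^a$ (which mixes grades in a way that only collapses nicely when $n=4$) with the full sum over all ordered multi-indices. First I would multiply the covering relation $\widehat{S}e_a S^{-1}=p^b_a e_b$ on the left by $e^A$ (suitably interpreted) and sum; more precisely, using Lemma \ref{lemma2} in the form $\beta_A=\widehat{S}^{\,?}\cdots$ — the cleaner route is to observe that the map $e_a\mapsto \widehat{S}e_aS^{-1}=\beta_a$ is an algebra-compatible change of the generating frame, so by Lemma \ref{lemma2} we get $\widehat{S}e_AS^{-1}=\beta_A=p^B_A e_B$ for every multi-index $A$ (being careful with the grade involution: for $|A|=k$ one has $\widehat{S}e_A S^{-1}$ versus $Se_AS^{-1}$ differing by $(-1)^{?}$, but since $S\in\Spin(p,q)\subset\cl^{(0)}_{p,q}$ we have $\widehat{S}=S$ and this subtlety disappears — which is exactly why the theorem is stated for $\Spin$, not $\Pin$). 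Then I would right-multiply by $e^A$ and sum over all $A$:
\begin{eqnarray}
\sum_A S e_A S^{-1} e^A = \sum_{A,B} p^B_A e_B e^A = M.\nonumber
\end{eqnarray}

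The key algebraic input replacing (\ref{h2}) is the identity $\sum_A e_A X e^A = c\, \pi_0(X)$ for $X\in\cl_{p,q}$ with $n$ even, where the sum is over all $2^n$ ordered multi-indices and $c$ is a nonzero constant (one expects $c=2^n$ or $\pm 2^n$ up to grade-dependent signs; for $n$ even every nonzero grade is annihilated by this two-sided average — this is the standard ``averaging over the group generated by the basis elements'' from \cite{averaging},\cite{averaging2}). Granting this, $\sum_A S e_A S^{-1} e^A = \sum_A e_A (S S^{-1}) e^A$ is not quite right since $S$ does not commute past $e_A$; instead one writes $S e_A S^{-1} e^A$ and uses that conjugation by $S$ permutes the averaging, giving $\sum_A S e_A S^{-1} e^A = S\big(\sum_A e_A S^{-1} e^A\big)$, and then $\sum_A e_A S^{-1} e^A = c\,\pi_0(S^{-1}) = c\,\pi_0(\widetilde{S}) = c\,\pi_0(S)$ (using $\widetilde{S}S=\pm e$, so $S^{-1}=\pm\widetilde{S}$ and $\pi_0$ is reversion-invariant). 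Hence $M = c\,\pi_0(S)\,S$ — wait, the factors must be ordered as $M = c\, S\,\pi_0(S)\, $ with $\pi_0(S)$ scalar, so $M = c\,\pi_0(S)\, S$, a scalar multiple of $S$. This is the heart of the proof and I expect the precise bookkeeping of this averaging identity — getting the constant $c$ and verifying it kills all grades $k\ge 1$ exactly when $n$ is even — to be the main obstacle; it is presumably where the cited averaging machinery does the work.

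Once $M = c\,\pi_0(S)\,S$ with $\pi_0(S)\in\R$, the rest is formal. Take reversion: $\widetilde{M} = c\,\pi_0(S)\,\widetilde{S}$, so
\begin{eqnarray}
\widetilde{M}M = c^2 \pi_0(S)^2\, \widetilde{S}S = c^2\pi_0(S)^2\,\alpha\, e,\nonumber
\end{eqnarray}
which lies in $\cl^0_{p,q}\cong\R$, establishing the claimed centrality. Multiplying by $\alpha$ (with $\alpha^2=e$) gives $\alpha\,\widetilde{M}M = c^2\pi_0(S)^2\, e \ge 0$, so the square root is real; then $\sqrt{\alpha\widetilde{M}M} = |c\,\pi_0(S)|\,e$ and
\begin{eqnarray}
\frac{M}{\sqrt{\alpha\widetilde{M}M}} = \frac{c\,\pi_0(S)\,S}{|c\,\pi_0(S)|} = \pm S,\nonumber
\end{eqnarray}
which is (\ref{1}). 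Note the condition (\ref{cond}) that $M\neq 0$ is exactly $\pi_0(S)\neq 0$, the analogue of Hestenes' condition (\ref{condH}). Finally, the identification $\alpha = \widetilde{S}S = \sign(p^{1\ldots p}_{1\ldots p})\,e$ follows from the component analysis of Section \ref{sec3}: on $\Spin_+(p,q)$ one has $\widetilde{S}S=+e$ and the corresponding $P$ has $p^{1\ldots p}_{1\ldots p}\ge 1>0$, while on the other component of $\Spin(p,q)$ (covering $\SO(p,q)\setminus\SO_+(p,q)$) the sign flips; the equality $\sign(p^{1\ldots p}_{1\ldots p})=\sign(p^{p+1\ldots n}_{p+1\ldots n})$ on $\SO(p,q)$ is immediate from (\ref{4t}) since $\det P = 1$ there.
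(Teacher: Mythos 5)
Your proposal is correct and follows essentially the same route as the paper: the key identity $\sum_A e_A U e^A = 2^n\pi_{\Cen}(U)$ (equal to $2^n\pi_0(U)$ for even $n$) is exactly the Reynolds/averaging operator the paper quotes from \cite{averaging}, and your lifting via Lemma \ref{lemma2}, factoring out $S$, reversion, multiplication, and square-root steps coincide with the paper's proof. The only slip is writing $\pi_0(S^{-1})=\pi_0(S)$ (it is $\pm\pi_0(S)$ when $\widetilde{S}S=-e$), but this sign is absorbed into the final $\pm$, so the argument stands.
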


\begin{thm}\label{th2} Let us consider the real Clifford algebra $\cl_{p,q}$ with odd $n=p+q$. Let $P\in\OO(p,q)$ be an orthogonal matrix such that
\begin{eqnarray}
M:=\sum_{A, B}(\det P)^{|A|}p^B_A e_B e^A\neq 0.\label{cond2}
\end{eqnarray}
Then we can find the elements $\pm S\in\Pin(p,q)$ that correspond to $P=||p^b_a||\in\OO(p,q)$ as two-sheeted covering $\widehat{S} e_a S^{-1}=p_a^be_b$ in the following way:
\begin{eqnarray}
S=\pm\frac{M}{\sqrt{\alpha\,\widetilde{M}M}},\label{2}
\end{eqnarray}
where
$$\widetilde{M}M\in\cl^0_{p,q}\subset\Cen(\cl_{p,q})\cong\left\lbrace
\begin{array}{ll}
\R\oplus\R, & \mbox{if $p-q=1\mod 4$;}\\
\C, & \mbox{if $p-q=3 \mod 4$}
\end{array}
\right.$$
and the sign
\begin{eqnarray}
\alpha:=\left\lbrace
\begin{array}{ll}
\sign(p^{p+1 \ldots n}_{p+1 \ldots n})e=\widetilde{S}{S}=\pm e, & \mbox{if $n=1\mod 4$;}\\
\sign(p^{1\ldots p}_{1\ldots p})e=\widehat{\widetilde{S}}S=\pm e, & \mbox{if $n=3 \mod 4$}
\end{array}
\right.\label{cas}
\end{eqnarray}
depends on the component of the orthogonal group $\OO(p,q)$ (or the corresponding component of the group $\Pin(p,q)$).
\end{thm}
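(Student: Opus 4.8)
The plan is to rerun the Hestenes-type argument of Section~\ref{sec4}, but with the single-index operator $S\mapsto e_aSe^a$ of (\ref{h2}) replaced by the full averaging operator $T(U):=\sum_A e_AUe^A$, the sum running over all $2^n$ ordered multi-indices $A$. The first step is the identity $T(U)=2^n\bigl(\pi_0(U)+\pi_n(U)\bigr)$ for every $U\in\cl_{p,q}$ (with $n$ odd), which here plays the role that (\ref{h2}) plays for $n=4$; for even $n$ the same computation gives $T=2^n\pi_0$, the fact underlying Theorem~\ref{th1}. Since $e^A=(e_A)^{-1}$, the operator $T$ is a sum of conjugations, and reindexing the sum by symmetric differences of multi-indices shows $e_B\,T(U)\,e^B=T(U)$ for every $B$, so $T(U)\in\Cen(\cl_{p,q})=\cl^0_{p,q}\oplus\cl^n_{p,q}$. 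Writing $e_Ae_C=\varepsilon(A,C)\,e_Ce_A$ with $\varepsilon(A,C)=\pm1$, one has $T(e_C)=\bigl(\sum_A\varepsilon(A,C)\bigr)e_C$, and $A\mapsto\varepsilon(A,C)$ is a character of $(\mathbb Z/2)^n$ that is trivial precisely when $e_C$ is central, i.e. when $|C|\in\{0,n\}$; otherwise the sum vanishes. Hence $T(e_C)=2^ne_C$ for $|C|\in\{0,n\}$ and $T(e_C)=0$ for $0<|C|<n$, which is the identity.

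Next I would express $M$ through the unknown $S$. Let $\pm S\in\Pin(p,q)$ cover $P$, so $\widehat{S}e_aS^{-1}=p_a^be_b=:\beta_a$ by (\ref{sv}); here $S$ is homogeneous, lying in $\cl^{(0)}_{p,q}$ when $\det P=1$ and in $\cl^{(1)}_{p,q}$ when $\det P=-1$, whence $\widehat{S}=(\det P)\,S$ and $S^{-1}\widehat{S}=(\det P)\,e$. A telescoping computation starting from $\beta_A=\beta_{a_1}\cdots\beta_{a_k}$ and using these relations together with Lemma~\ref{lemma2} gives $\beta_A=(\det P)^{|A|}Se_AS^{-1}$, so $(\det P)^{|A|}\beta_A=Se_AS^{-1}$; combining this with $M=\sum_{A,B}(\det P)^{|A|}p^B_Ae_Be^A=\sum_A(\det P)^{|A|}\beta_Ae^A$ (Lemma~\ref{lemma2}) yields
\[
M=\sum_A (Se_AS^{-1})\,e^A=S\,T(S^{-1})=2^n\,S\bigl(\pi_0(S^{-1})+\pi_n(S^{-1})\bigr).
\]
Put $z:=\pi_0(S^{-1})+\pi_n(S^{-1})\in\Cen(\cl_{p,q})$. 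Since $S^{-1}$ has the same parity as $S$ and $n$ is odd, $z$ is homogeneous: $z=\pi_0(S^{-1})\in\cl^0_{p,q}$ if $\det P=1$, and $z=\pi_n(S^{-1})\in\cl^n_{p,q}$ if $\det P=-1$. In either case $M=2^nzS$, so $M\neq0$ iff $z\neq0$, and then $z$ is invertible (a nonzero scalar, or $\lambda e_{1\ldots n}$ with $e_{1\ldots n}$ invertible).

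For the last step, centrality of $z$ gives $\widetilde{M}M=2^n\widetilde{S}\,\widetilde{z}\cdot2^nzS=4^n(\widetilde{z}z)(\widetilde{S}S)$. Here $\widetilde{z}z\in\cl^0_{p,q}$: it equals $\pi_0(S^{-1})^2$ if $\det P=1$, and $(-1)^{n(n-1)/2}\lambda^2(e_{1\ldots n})^2$ with $\pi_n(S^{-1})=\lambda e_{1\ldots n}$ if $\det P=-1$; and $\widetilde{S}S=\pm e$ since $S\in\Pin(p,q)$. So $\widetilde{M}M\in\cl^0_{p,q}\subset\Cen(\cl_{p,q})$, the center being $\cong\R\oplus\R$ or $\C$ according as $(e_{1\ldots n})^2=\pm e$, equivalently as $p-q\equiv1$ or $3\bmod4$. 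Comparing with $(2^nz)^2=4^nz^2$, one finds $4^nz^2=\alpha\,\widetilde{M}M$ exactly for $\alpha=(z/\widetilde{z})(\widetilde{S}S)$, which is $\widetilde{S}S$ when $\det P=1$, and when $\det P=-1$ is $(-1)^{n(n-1)/2}\widetilde{S}S$, i.e. $\widetilde{S}S$ if $n\equiv1\bmod4$ and $-\widetilde{S}S=\widehat{\widetilde{S}}S$ if $n\equiv3\bmod4$. Matching $\widetilde{S}S$ (resp. $\widehat{\widetilde{S}}S$) with $\sign(p^{p+1\ldots n}_{p+1\ldots n})e$ (resp. $\sign(p^{1\ldots p}_{1\ldots p})e$) via the description of the five components in Section~\ref{sec3} reproduces the sign (\ref{cas}). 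Thus $2^nz$ is a square root of $\alpha\widetilde{M}M$ inside the center, of grade $0$ if $\det P=1$ and grade $n$ if $\det P=-1$; taking $\sqrt{\alpha\widetilde{M}M}$ to be this square root gives $M/\sqrt{\alpha\widetilde{M}M}=2^nzS/(\pm2^nz)=\pm S$, which is (\ref{2}), the two signs producing the two elements of the fiber $\{\pm S\}$.

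The step I expect to be the main obstacle is precisely this choice of square root. Unlike the even-dimensional case of Theorem~\ref{th1}, for odd $n$ the operator $T$ has a nonvanishing grade-$n$ part, so $z$ — and hence the element squaring to $\alpha\widetilde{M}M$ — may lie in $\cl^n_{p,q}$ rather than in $\R$. One must therefore read $\sqrt{\alpha\widetilde{M}M}$ inside the two-dimensional center $\cl^0_{p,q}\oplus\cl^n_{p,q}$ and select the correct branch (grade $0$ versus grade $n$) according to $\det P$, while keeping everything consistent with the sign $\alpha$ and with the two possibilities $\R\oplus\R$, $\C$ for the center; this is the bookkeeping that forces the split into $n\equiv1$ and $n\equiv3\bmod4$ in the statement. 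A secondary, routine but tedious point is the identification, for homogeneous $S\in\Pin(p,q)$, of $\widetilde{S}S$ and $\widehat{\widetilde{S}}S$ with the signs of the minors $p^{1\ldots p}_{1\ldots p}$ and $p^{p+1\ldots n}_{p+1\ldots n}$, as in Section~\ref{sec3}.
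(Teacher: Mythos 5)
Your proposal is correct and takes essentially the same route as the paper: the averaging identity $\sum_A e_A U e^A=2^n\pi_{\Cen}(U)$, the resulting relation $M=2^n S\,(\pi_0(S^{-1})+\pi_n(S^{-1}))$, the squaring step producing $\alpha\,\widetilde{M}M$, and a square root taken inside the center. The only differences are cosmetic: the paper handles $n\equiv 3\bmod 4$ by switching to Clifford conjugation (working with $\widehat{\widetilde{M}}M$, which equals $\widetilde{M}M$ because $M$ is even), whereas you obtain the same case split from the homogeneity of $\pi_{\Cen}(S^{-1})$ using reversion alone, and you spell out (more explicitly than the paper) that the relevant square root may lie in the grade-$n$ component of the center.
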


The conditions (\ref{cond}) and (\ref{cond2}) are equivalent to the condition $\pi_{\Cen}(S)\neq 0$ for the corresponding element $S\in\Pin(p,q)$, where $\pi_{\Cen}$ is the projection onto the center of the Clifford algebra
$$\Cen(\cl_{p,q})=\{U\in\cl_{p,q}: UV=VU \,\mbox{for all}\, V\in\cl_{p,q}\}.$$

\begin{proof}
In the proof of Theorems \ref{th1} and \ref{th2}, we use Reynolds operators of Salingaros vee group (see \cite{averaging})
\begin{eqnarray}
F(U):=\frac{1}{2^n}e_A U e^A=\pi_{\Cen}(U),\qquad U\in\cl_{p,q}.\label{contr}
\end{eqnarray}

We have the following relation (\ref{sv}) between orthogonal matrix $P=||p^b_a||\in\OO(p,q)$ and the corresponding two elements of spin group $\pm S\in\Pin(p,q)$. We can rewrite this relation in the following way
\begin{eqnarray}
S e_a S^{-1}=(\det P) p_a^b e_b\label{t1}
\end{eqnarray}
because of the relation between parity of the element of spin group and the determinant of the corresponding orthogonal matrix (see Section \ref{sec3}). Multiplying (\ref{t1}) by itself several times and using Lemma \ref{lemma2}, we get
\begin{eqnarray}
S e_A S^{-1}=(\det P)^{|A|}p_A^B e_B.\label{t2}
\end{eqnarray}
Multiplying both sides of (\ref{t2}) on the right by $e^A=(e_A)^{-1}$, we obtain
\begin{eqnarray}
S e_A S^{-1} e^A=(\det P)^{|A|}p_A^B e_B e^A.\label{t3}
\end{eqnarray}
Here we have a sum over multi-indices $A$, $B$ of the same length $|A|=|B|$.
We denote the right side of (\ref{t3}) by
\begin{eqnarray}
M:=\sum_{A, B} (\det P)^{|A|}p_A^B e_B e^A\in\cl^{(0)}_{p,q}\label{MM}
\end{eqnarray}
and get
\begin{eqnarray}
2^n S\, \pi_{\Cen}(S^{-1})=M.\label{t4}
\end{eqnarray}
Using $M\in\cl^{(0)}_{p,q}$ and $S\in\cl^{(0)}_{p,q}\cup\cl^{(1)}_{p,q}$, we get $\pi_{\Cen}(S^{-1})\in\cl^{(0)}_{p,q}\cup\cl^{(1)}_{p,q}$. The condition $M\neq 0$ is equivalent to the condition $\pi_{\Cen}(S)\neq 0$ for the element $S\in\Pin(p,q)$ because of (\ref{t4}) and these facts.

We have the following well-known fact (see, for example \cite{Lounesto}) about the center of the Clifford algebra $\cl_{p,q}$:
\begin{eqnarray}
\Cen(\cl_{p,q})=\left\lbrace
\begin{array}{ll}
\cl^0_{p,q}=\{ue: u\in\R\}, & \mbox{if $n$ is even;}\\
\cl^0_{p,q}\oplus\cl^n_{p,q}=\{ue+u_{1\ldots n}e_{1\ldots n}: u, u_{1\ldots n}\in\R\}, & \mbox{if $n$ is odd.}
\end{array}\nonumber
\right.\nonumber
\end{eqnarray}
We have
$$(e_{1 \ldots n})^2=(-1)^{\frac{n(n-1)}{2}+q}e=\left\lbrace
\begin{array}{ll}
e, & \mbox{if $p-q=1\mod 4$;}\\
-e, & \mbox{if $p-q=3\mod 4$}
\end{array}\nonumber
\right.$$
and
$$\Cen(\cl_{p,q})\simeq\left\lbrace
\begin{array}{ll}
\R, & \parbox{.5\linewidth}{if $n$ is even;}\\
\R\oplus\R, & \parbox{.5\linewidth}{if $p-q=1\mod 4$;}\\
\C, & \parbox{.5\linewidth}{if $p-q=3 \mod 4$.}.
\end{array}\nonumber
\right.$$
Note that $\R$ and $\C$ are fields, but the set of double numbers (split-complex numbers) $\R\oplus\R$ is not a field. It is an associative commutative algebra of dimension 2 over real numbers and it has zero divisors.

Now let us consider the cases of even and odd $n$ separately.

In the case of even $n$, we have $\pi_{\Cen}(S^{-1})=\pi_0(S^{-1})$. Let us take reversion of both sides of (\ref{t4}). We obtain
\begin{eqnarray}
2^n \pi_{0}(S^{-1}) \widetilde{S}=\widetilde{M}.\label{t5}
\end{eqnarray}
Multiplying both sides of (\ref{t5}) by both sides of (\ref{t4}) we obtain
\begin{eqnarray}
2^n 2^n \pi_0(S^{-1}) (\widetilde{S}S) \pi_0(S^{-1})=\widetilde{M} M.\label{t6}
\end{eqnarray}
We have $\alpha:=\widetilde{S}S=\pm e\in\cl^0_{p,q}$ and get
\begin{eqnarray}
(2^n \pi_0(S^{-1}) )^2=\alpha\widetilde{M} M.\label{t7}
\end{eqnarray}
Taking square root and substituting this expression into (\ref{t4}), we get (\ref{1}).

In the case of odd $n$, we have $\pi_{\Cen}(S^{-1})=\pi_0(S^{-1})+\pi_n(S^{-1})$ and
\begin{eqnarray}
\widetilde{e_{1\ldots n}}=(-1)^{\frac{n(n-1)}{2}}e_{1\ldots n}=\left\lbrace
\begin{array}{ll}
e_{1\ldots n}, & \mbox{if $n=1\mod 4$;}\\
-e_{1\ldots n}, & \mbox{if $n=3\mod 4$,}
\end{array}\nonumber
\right.\nonumber\\
\widehat{\widetilde{e_{1\ldots n}}}=(-1)^{\frac{n(n-1)}{2}+n}e_{1\ldots n}=\left\lbrace
\begin{array}{ll}
e_{1\ldots n}, & \mbox{if $n=3\mod 4$;}\\
-e_{1\ldots n}, & \mbox{if $n=1\mod 4$.}
\end{array}\nonumber
\right.\nonumber
\end{eqnarray}

In the case $n=1\mod 4$, taking reversion of both sides of (\ref{t4}), we get
\begin{eqnarray}
2^n (\pi_{0}(S^{-1})+\pi_n(S^{-1})) \widetilde{S}=\widetilde{M}.\label{t8}
\end{eqnarray}
Multiplying both sides of (\ref{t8}) by both sides of (\ref{t4}), we obtain
\begin{eqnarray}
2^n 2^n  (\pi_{0}(S^{-1})+\pi_n(S^{-1}))  (\widetilde{S}S) (\pi_{0}(S^{-1})+\pi_n(S^{-1}))=\widetilde{M} M.\label{t9}
\end{eqnarray}
We have $\alpha:=\widetilde{S} S=\pm e\in\cl^0_{p,q}$ and get
\begin{eqnarray}
(2^n  (\pi_{0}(S^{-1})+\pi_n(S^{-1}))  )^2=\alpha\widetilde{M} M.\label{t10}
\end{eqnarray}
Taking square root and substituting this expression into (\ref{t4}), we get (\ref{2}) for the first case (\ref{cas}).

In the case $n=3\mod 4$, taking superposition of reversion and grade involution (it is called Clifford conjugation) of both sides of (\ref{t4}), we get
\begin{eqnarray}
2^n (\pi_{0}(S^{-1})+\pi_n(S^{-1})) \widehat{\widetilde{S}}=\widehat{\widetilde{M}}.\label{t11}
\end{eqnarray}
Multiplying both sides of (\ref{t11}) by both sides of (\ref{t4}), we obtain
\begin{eqnarray}
2^n 2^n  (\pi_{0}(S^{-1})+\pi_n(S^{-1}))  ( \widehat{\widetilde{S}}S) (\pi_{0}(S^{-1})+\pi_n(S^{-1}))=\widehat{\widetilde{M}} M.\label{t12}
\end{eqnarray}
We have $\alpha:=\widehat{\widetilde{S}}S=\pm e\in\cl^0_{p,q}$ and get
\begin{eqnarray}
(2^n  (\pi_{0}(S^{-1})+\pi_n(S^{-1}))  )^2=\alpha\widehat{\widetilde{M}} M.\label{t13}
\end{eqnarray}
Taking square root and substituting this expression into (\ref{t4}), we get (\ref{2}) for the second case (\ref{cas}).

The theorems are proved.
\end{proof}

\section{Calculation of rotors in geometric algebra}
\label{sec6}

Let us consider the particular case of Theorems \ref{th1} and \ref{th2} for the elements of the group $\Spin_+(p,q)$ and the corresponding group $\SO_+(p,q)$. Elements of $\Spin_+(p,q)$ are often called rotors and have wide application in geometric algebra (see, for example, \cite{Hestenes}, \cite{Hestenes2}, \cite{Lasenby}, \cite{bayro2}, \cite{dorst}, \cite{DLL}, \cite{LL}). We obtain the following statement.

\begin{cor}\label{th3} Let we have
$$S e_a \widetilde{S}=\beta_a,\qquad \widetilde{S}=S^{-1},$$
where two frames $e_a$ and $\beta_a$, $a=1, \ldots, n$, are related by a rotation.

If
\begin{eqnarray}
M=\beta_A e^A=e+\beta_a e^a +\cdots+\beta_{1\ldots n}e^{1\ldots n}\neq 0,\label{cond3}
\end{eqnarray}
then
$$S=\pm \frac{M}{\sqrt{\widetilde{M}M}}.$$
\end{cor}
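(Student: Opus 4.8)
The plan is to obtain this corollary as a direct specialization of Theorem~\ref{th1} and Theorem~\ref{th2} to the subgroups $\Spin_+(p,q)\subset\Spin(p,q)$ and $\SO_+(p,q)\subset\SO(p,q)$, handling the even and odd $n$ cases in parallel. First I would observe that the hypothesis $S e_a \widetilde{S}=\beta_a$ with $\widetilde{S}=S^{-1}$ places $S$ in $\Spin_+(p,q)$: indeed $\widetilde{S}S=e$ is exactly the defining condition for $\Spin_+$ among the elements of $\Gamma^+_{p,q}$, and the induced orthogonal matrix $P=\|p_a^b\|$ defined by $\beta_a=p_a^b e_b$ lies in $\SO_+(p,q)$. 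By Lemma~\ref{lemma2}, $\beta_A = p_A^B e_B$, so $M:=\beta_A e^A = p_A^B e_B e^A$, which is precisely the quantity $M$ appearing in~(\ref{cond}) when $\det P=1$. The expansion $M=e+\beta_a e^a+\cdots+\beta_{1\ldots n}e^{1\ldots n}$ displayed in~(\ref{cond3}) is just the grouping of this sum by grade, using that the empty multi-index contributes $e_\o e^\o = e$ and that the top-grade term is $\beta_{1\ldots n}e^{1\ldots n}=\det(P)\,e_{1\ldots n}e^{1\ldots n}$.

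Next I would invoke the appropriate theorem depending on the parity of $n$. If $n$ is even, Theorem~\ref{th1} applies directly with $\det P=1$: the hypothesis $M\neq 0$ is~(\ref{cond}), we get $\widetilde{M}M\in\cl^0_{p,q}\cong\R$, and $S=\pm M/\sqrt{\alpha\,\widetilde{M}M}$ with $\alpha=\sign(p^{1\ldots p}_{1\ldots p})\,e=\widetilde{S}S$. Since $S\in\Spin_+(p,q)$ we have $\widetilde{S}S=+e$, hence $\alpha=+e$ and the formula collapses to $S=\pm M/\sqrt{\widetilde{M}M}$. If $n$ is odd, I would note that on the subgroup $\Spin_+(p,q)$ the relation $\widehat{S}e_a S^{-1}=p_a^b e_b$ of Theorem~\ref{th2} coincides with $Se_a S^{-1}=p_a^b e_b$ because $S\in\cl^{(0)}_{p,q}$ forces $\widehat{S}=S$; moreover $\det P=1$ kills the factors $(\det P)^{|A|}$ in~(\ref{cond2}), so again $M=\beta_A e^A$. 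Then Theorem~\ref{th2} gives $S=\pm M/\sqrt{\alpha\,\widetilde{M}M}$, and in both sub-cases $n=1\bmod 4$ and $n=3\bmod 4$ the sign $\alpha$ equals $\widetilde{S}S$ or $\widehat{\widetilde{S}}S$ respectively, each of which is $+e$ for elements of $\Spin_+(p,q)$ (and for these even elements the two anti-involutions agree with reversion). So once more $\alpha=+e$ and the formula reduces to the stated one.

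The only genuine point requiring care—and the step I expect to be the main obstacle in writing this cleanly—is justifying that $\alpha=+e$ uniformly: one must check that for $S\in\Spin_+(p,q)$ the relevant conjugation condition ($\widetilde{S}S=+e$, which is the definition) does hold, and that when $n$ is odd the quantity $\widehat{\widetilde{M}}M$ appearing in Theorem~\ref{th2} for $n=3\bmod 4$ actually equals $\widetilde{M}M$; this follows because $M\in\cl^{(0)}_{p,q}$ (being a sum of $e_B e^A$ with $|A|=|B|$, hence even grade), so $\widehat{M}=M$ and Clifford conjugation reduces to reversion. With that remark in place, the corollary is an immediate corollary of the two theorems. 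Finally, I would note the interpretational statement: the condition $M\neq 0$ is equivalent to $\pi_{\Cen}(S)\neq 0$ by~(\ref{t4}), so the exceptional rotors excluded are exactly those whose scalar (and, for odd $n$, pseudoscalar) part vanishes.
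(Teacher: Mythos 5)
Your proposal is correct and follows essentially the same route as the paper, which obtains Corollary~\ref{th3} as the direct specialization of Theorems~\ref{th1} and~\ref{th2} to $S\in\Spin_+(p,q)$, $P\in\SO_+(p,q)$, where $\det P=1$ removes the factors $(\det P)^{|A|}$, Lemma~\ref{lemma2} gives $M=\beta_A e^A=p_A^B e_B e^A$, and $\widetilde{S}S=\widehat{\widetilde{S}}S=+e$ (since $S$ is even) forces $\alpha=+e$. Your additional remarks --- that $\widehat{S}=S$ reconciles the twisted covering of Theorem~\ref{th2} with $Se_aS^{-1}$, and that $M\in\cl^{(0)}_{p,q}$ makes Clifford conjugation of $M$ agree with reversion --- are exactly the small checks implicit in the paper's statement.
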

Using our previous notation, we can write
$$\beta_a=p_a^b e_b,\quad M=\beta_A e^A=p_A^B e_B e^A,$$
$$P=||p_a^b||\in\SO_+(p,q),\quad S\in\Spin_+(p,q).$$
In the particular case of $n=3$, we have
\begin{eqnarray}
M=e+\beta_a e^a+\beta_{a_1 a_2}e^{a_1 a_2}+\beta_{123}e^{123}=2(e+ \beta_a e^a)\label{rot}
\end{eqnarray}
because $\beta_{123}=e_{123}\in\Cen(\cl_{p,q})$. We can remove scalar ``$2$'' in (\ref{rot}) because of normalization and finally obtain the following well-known formula for the rotor $S$ (see, for example,  p. 103 in \cite{Lasenby} or p. 72 in \cite{bayro2})
\begin{eqnarray}
M=e+\beta_a e^a,\qquad S=\pm \frac{M}{\sqrt{\widetilde{M}M}}.\label{las}
\end{eqnarray}
This formula is widely used in different applications of geometric algebra. Corollary \ref{th3} generalizes (\ref{las}) to the case of arbitrary $n$.

\bigskip

The results of this paper were reported at the conference AGACSE 2018 (Campinas, Brazil, July 2018). We hope that these results will be useful for different applications in computer science, robotics, and engineering. There are well-known methods of calculating of rotors in dimensions $n=3$ and $4$, but we often need geometric algebra of higher dimensions. For example, the conformal geometric algebra $\cl_{4,1}$ of dimension $n=5$ is widely used in different applications (see, for example, \cite{bayro}, \cite{dorst}, \cite{hildenbrand}, \cite{HitzerConf}). The results of this paper allow us to calculate rotors in arbitrary dimension $n=p+q\geq 1$.


\subsection*{Acknowledgment}

The author is grateful to N. Marchuk and participants of the AGACSE 2018 Conference for fruitful discussions.
The author is grateful to two anonymous reviewers for their careful reading of the paper and helpful comments.
This work is supported by the Russian Science Foundation (project 18-71-00010).

\end{document}